




\documentclass{ecai} 



\usepackage{latexsym}
\usepackage{amssymb}
\usepackage{amsmath}
\usepackage{amsthm}
\usepackage{booktabs}
\usepackage{enumitem}
\usepackage{graphicx}
\usepackage{color}

\usepackage{amsfonts}
\usepackage{algorithm}
\usepackage{algpseudocode}
\usepackage{enumitem}
\usepackage{float}
\usepackage{flushend} 
\usepackage{mathtools}
\usepackage{subfigure}
\usepackage{xcolor}



\newtheorem{theorem}{Theorem}
\newtheorem{lemma}[theorem]{Lemma}

\newtheorem{definition}{Definition}

\newtheorem{example}{Example}

\newcommand{\Auction}{\mathrm{Auct}}

\renewcommand{\Pr}{\mathrm{Pr}}
\newcommand{\R}{\mathbb{R}}
\newcommand{\Stat}{\mathrm{Stat}}
\newcommand{\SecPrice}{\mathrm{SecPrice}}

\algrenewcommand\algorithmicrequire{\textbf{Input:}}
\algrenewcommand\algorithmicensure{\textbf{Output:}}



\newcommand{\BibTeX}{B\kern-.05em{\sc i\kern-.025em b}\kern-.08em\TeX}


\begin{document}


\begin{frontmatter}


\paperid{123} 


\title{Balancing Efficiency with Equality: Auction Design with Group Fairness Concerns}




\author[A]{\fnms{Fengjuan}~\snm{Jia}}
\author[A]{\fnms{Mengxiao}~\snm{Zhang}\thanks{Corresponding author. Email: mengxiao.zhang@uestc.edu.cn}}
\author[B]{\fnms{Jiamou}~\snm{Liu}} 
\author[A]{\fnms{Bakh}~\snm{Khoussainov}} 

\address[A]{School of Computer Science and Engineering, University of Electronic Science and Technology of China, China}
\address[B]{School of Computer Science, The University of Auckland, New Zealand}


\begin{abstract}
The issue of fairness in AI arises from discriminatory practices in applications like job recommendations and risk assessments, emphasising the need for algorithms that do not discriminate based on group characteristics. This concern is also pertinent to auctions, commonly used for resource allocation, which necessitate fairness considerations. Our study examines auctions with groups distinguished by specific attributes, seeking to (1) define a fairness notion that ensures equitable treatment for all, (2) identify mechanisms that adhere to this fairness while preserving incentive compatibility, and (3) explore the balance between fairness and seller's revenue. 
We introduce two fairness notions—group fairness and individual fairness—and propose two corresponding auction mechanisms: the Group Probability Mechanism, which meets group fairness and incentive criteria, and the Group Score Mechanism, which also encompasses individual fairness. Through experiments, we validate these mechanisms' effectiveness in promoting fairness and examine their implications for seller revenue. 
\end{abstract}

\end{frontmatter}

\section{Introduction}
As automated decision-making systems increasingly orchestrate our lives, the issue of algorithmic fairness has become crucial. The 2026 White House report \cite{office2016big} advocates for ``equal opportunity by design'' as a guiding principle in AI system design and development. This principle seeks to address potential disparities in the outcomes of AI systems affecting various socioeconomic groups, defined by race, gender, age, and income brackets. Algorithmic fairness aims to mitigate the impact of these systems, ensuring they neither reinforce existing inequalities nor introduce new forms of discrimination \cite{finocchiaro2021bridging}.


Recent studies reveal pervasive unfairness across multiple domains, highlighting systemic biases in algorithmically driven decision-making systems. Job search systems, for instance, have been shown to disproportionately expose women to lower-paying opportunities \cite{lambrecht2019algorithmic}, while crime risk assessment tools significantly disadvantage African-Americans \cite{dressel2018accuracy}.


Auction mechanisms, a common method for allocating resources like properties, advertisements, and car licenses, similarly manifest fairness concerns. Notably, car license auctions in several large Asian cities, designed to control traffic and car ownership, often set high bid prices that favor wealthier individuals \cite{tan2020evaluating}. This system bars lower-income residents from owning cars, restricting their mobility and exacerbating socioeconomic divides. Car ownership, often essential for accessing better employment and services, becomes a privilege of the affluent, reinforcing existing social inequities \cite{chen2013bidding}. Similarly, housing auctions intensify economic disparities. Properties auctioned to the highest bidder enable wealthier individuals to secure homes in desirable areas, which are likely to appreciate in value. This not only widens the wealth gap through significant capital gains but also contributes to economic segregation and reduces affordable housing availability, pricing out lower-income groups from prosperous locales \cite{eggum2021housing}.


While auctions are an effective means of resource allocation and revenue generation, our research aims to balance efficiency with fairness in auction design, thus promoting social justice and equality. We first examine the concept of envy-freeness (EF) \cite{gal2016fairest,gamov1958puzzle,todo2011generalizing}, a well-discussed notion in mechanism design, which deems an auction envy-free if no buyer would benefit from swapping her allocation and payment with another. 
However, EF does \emph{not} necessarily address broader concerns of financial accessibility and is inadequate for our purposes. For instance, an auction that allocates the item to the buyer with the highest valuation while demanding a price that exceeds the affordability of other buyers is EF, but favors those with greater financial resources. Therefore, we need a new concept of fairness for auctions to ensure equity among all demographic groups. Our study addresses three main questions: 

\begin{itemize}
    \item[] {\bf (Q1)} How to define fairness for auctions to ensure equity among all demographic groups? 

    \item[] {\bf (Q2)} How to design auctions that are fair and incentive compatible (IC), while maximally ensuring seller's revenue?

    \item[] {\bf (Q3)} How to improve equity at the individual-level as well as the group-level in auction design?
\end{itemize}

\noindent {\bf Contribution.} To answer Q1, we introduce a new fairness notion, {\em $\epsilon$-group fairness}, which aims to minimise the welfare disparity between groups within a defined threshold, $\epsilon$. The parameter $\epsilon$ controls the level of fairness to be achieved by the auction mechanism. See Section~\ref{sec:formuation}. To answer Q2 and Q3, we  present two auction mechanisms: the Group Probability Mechanism (GPM) and the Group Score Mechanism (GSM). GPM assigns winning probabilities across groups, ensuring both incentive compatibility and $\epsilon$-group fairness; see Section~\ref{sec:probability}. GSM, meanwhile, allocates scores to buyers based on group characteristics. The scoring function is optimised via a neural network, trained to improve fairness both at the individual level; See Section~\ref{sec:score}. 
We conduct extensive experiments to empirically investigate the balance between group-level welfare, individual welfare and seller's revenue. We compare our GPM and GSM with the second-price mechanism, that achieves the optimal social welfare, and the simple mechanism, that achieves $0$-group fairness, under synthetic data. The results show that the GPM achieves good group fairness without compromising much of social welfare and of revenue, while GSM successfully achieves good fairness at both group- and individual-levels.
See Section~\ref{sec:experiments}. 


\section{Related work}

\paragraph{Fairness in mechanism design.} Fairness has emerged as a crucial consideration in the field of mechanism design, particularly in the context of resource allocation. In these problems, a prominent objective is to distribute resources among individuals or groups without monetary transactions, a setup often referred to as {\em mechanism design without money}. The fairness notions most frequently studied in this setting are envy-freeness (EF) \cite{gamov1958puzzle} and proportionality (PROP) \cite{steinhaus1949division}. EF, as defined by Gamov and Steinhaus, ensures that no individual envies another after the allocation, meaning no one would prefer another's allocation to their own. PROP, on the other hand, ensures that each individual receives a share proportional to their entitlement from the total available resources.


However, EF and PROP do not always lead to feasible allocations when dealing with individual items, prompting researchers to propose several relaxed fairness notions. These include envy-freeness up to one good (EF1) and envy-freeness for any good (EFX), which permit some level of envy but under strict conditions. EF1 \cite{lipton2004approximately,budish2011combinatorial}, for example, allows envy provided it can be resolved by reallocating a single item, while EFX \cite{caragiannis2019unreasonable} requires that envy can be resolved by reallocating any item in question. The maximin share guarantee (MMS) \cite{budish2011combinatorial} is another relaxed criterion ensuring that when dividing goods into bundles, each participant receives at least the worst one.

%

Recent literature also explores group-level fairness notions, such as group envy-freeness (GEF) \cite{conitzer2019group,benabbou2019fairness}, which ensures that no group envies another after allocations that could be Pareto improved. These work allow for arbitrary group partitioning, reflecting real-world scenarios where groups are not predefined. This flexibility, however, introduces challenges in achieving fairness, especially for indivisible goods, leading to the introduction of group envy-freeness up to one good (GEF1) \cite{conitzer2019group,aziz2021almost}.


Building on these foundational ideas, some studies have extended the notion of envy-freeness to scenarios involving money. 
This line of research is crucial for applications such as rent division \cite{gal2016fairest} and auction design, where monetary considerations significantly impact fairness. For instance, the fairest rent division extends EF to situations where tenants have varying budgets \cite{airiau2023fair}, accommodating real-life financial constraints. In auctions, the concept has been expanded to include envy-freeness between individuals and groups, as well as among groups \cite{todo2011generalizing}, through mechanisms like the Average-Max-Minimal-Bundle, which aims to ensure fairness at multiple levels. Also, \cite{birmpas2022fair} extends the concepts of EF1 and EFX to the group level in the context of sponsored search auctions. 


\paragraph{Fairness in AI.} In the field of AI, fairness are also defined at both individual and group levels. Fairness at individual level requires to treat similar individuals in a similar way \cite{dwork2012fairness}. Specifically, given a similarity metric, fairness-through-awareness \cite{dwork2012fairness} requires individuals who are close under the similarity metric receive similar outcome. Fairness-through-unawareness \cite{grgic2016case}  requires that any protected attributes are not explicitly used in deciding outcomes. Counterfactual fairness\cite{kusner2017counterfactual} requires that the outcome is same in both the actual world and a counterfactual world where the individual changes only her protected attribute. Fairness at group level aims to equalise each group's opportunity to positive outcomes \cite{hardt2016equality}.  Demographic parity  (as referred to as statistical parity) \cite{dwork2012fairness} requires that the demographics of individuals who receive positive (or negative) outcome are consistent with those of the whole population.  Equal odds \cite{hardt2016equality} ensures that the probabilities of an individual in a positive class being correctly assigned a positive outcome and of an individual in a negative class being incorrectly assigned to a positive outcome should be equal for individuals  in protected and unprotected groups. Equal opportunity \cite{hardt2016equality} relaxes the notion of equal odds and ensures only the probability of an individual in a positive class being correctly assigned a positive outcome should be equal for individuals  in protected and unprotected groups.

Most of the studies in mechanism design consider mechanism without money and the corresponding theoretical results and mechanisms can not be applied to our problem. \cite{todo2011generalizing} is the only one that focuses on fairness at group level in auctions. However, their setup is different from ours as we assume that the agents are partitioned into groups by their characteristics while they allow any partition. Also, as we discussed in the introduction, envy-free allocations does not ensure equity among different socioeconomic groups. 
In contrast, fairness in the field of AI is often defined based on pre-defined socioeconomic groups. Also, it aims to equally assign the favourable outcomes to all groups. Therefore, there is a need for new fairness notions by adapting the fairness notion in AI to the auction setting, and design new mechanisms that meet the proposed fairness.

\section{Problem formulation} \label{sec:formuation}
\subsection{Auction setup}
We investigate an auction setting which consists of a single seller, denoted by $s$, and a set of $n$ buyers, denoted by $N$. The seller has an indivisible item for sale. The buyers in $N$ are organised into $m$ mutually disjoint subsets, $N_1,\ldots,N_m$, i.e.,  $N_1 \cup \ldots \cup N_m = N$, $N_1 \cap \ldots \cap N_m = \emptyset$. Intuitively, these sets represent mutually-exclusive socioeconomic groups defined by, e.g., gender, race, income brackets, etc. Let $[m]$ denote the set of natural numbers $\{1,\ldots,m\}$. Each buyer $i \in N$ has a positive real valuation $\theta_i \in [\underline{\theta},\overline{\theta}] \in \mathbb{R}^+$ to the item, where $[\underline{\theta},\overline{\theta}], 0\leq \underline{\theta} < \overline{\theta}$, is the support of $\theta_i$ and it is publicly known. 
Each buyer expresses her valuation by a {\em bid} $\theta_i'$, which is not necessarily the same as the true valuation $\theta_i$. Let $\theta \coloneqq (\theta_i)_{i \in N}$ and $\theta' \coloneqq (\theta_i')_{i \in N}$ denote the true valuations and the reported bids of all buyers, respectively. Using the bids $\theta'$ from all buyers, the auction mechanism decides which buyer wins the item and the price they must pay. The seller values the item at zero, implying no personal loss or gain from parting with the item other than the revenue generated from the sale. 
Let $\Theta \coloneqq [\underline{\theta},\overline{\theta}]^n$ denote the set of all possible valuations of all buyers. We formally define a deterministic mechanism and a randomised mechanism as follows. 

\begin{definition}
\label{def:mechanism}
A {\em deterministic mechanism} $M$ consists of two functions $(\pi,p)$, where $\pi\colon \Theta \to \{0,1\}^n$ is an {\em allocation function} and $p\colon \Theta \to \R^n$ is a {\em payment function}.  
\end{definition}

\noindent A deterministic mechanism $M$ takes the bids $\theta'$ of all buyers as inputs, and outputs an auction outcome $(\pi(\theta'),p(\theta'))$, where $\pi(\theta')\in \{0,1\}^n$ is an {\em allocation result} indicating whom the item is allocated to, while $p(\theta')\in \R^n$ is a {\em payment result} showing how much each buyer would pay. We write $\pi(\theta')$ as $(\pi_1(\theta'),\ldots,\pi_n(\theta'))$ and $p(\theta')$ as $(p_1(\theta'),\ldots,p_n(\theta'))$,  where each $\pi_i(\theta')\in\{0,1\}$ and $p_i(\theta')\in \R$ is buyer $i$'s allocation and payment. Given a mechanism $M$ with bids $\theta'$, the seller $s$ gains the utility $u_s\coloneqq\sum_{i\in N} p_i$, while each buyer $i$ gains a {\em utility} $u_i(\theta')\coloneqq \theta_i \pi_i - p_i$. The {\em social welfare} of $M$ on $\theta'$ is defined as the sum of the utilities of all buyers and the seller, i.e., $sw(\theta')\coloneqq \sum_{i\in N \cup \{s\}} u_i = \sum_{i \in N} \pi_i \theta_i $. The {\em revenue} of $M$ on $\theta'$ is defined as the total payment of all buyers, i.e., $rv(\theta') \coloneqq \sum_{i\in N} p_i$. 

While deterministic mechanisms produce specific, fixed outcomes based on the bids (i.e., one buyer definitively wins the item, and others do not), randomised mechanisms yield probabilities that describe expected outcomes over multiple realisations of the mechanism, offering a distribution of possible outcomes.

\begin{definition}
A {\em randomised mechanism} $M$ is one that, given $\theta' \in \Theta$, outputs $(\pi, p)$ such that $\pi$ and $p$ are randomised allocation and payment functions, respectively.
\end{definition}

\noindent In a randomised mechanism $M$ over $\theta'$, we denote by $\Pi_i(\theta')\in [0,1]$ and $P_i(\theta')\in \R$ the expected allocation and expected payment to buyer $i$, resp. 
Buyer $i$'s expected utility, denoted by $U_i(\theta')$, is $\theta_i\Pi_i(\theta')-P_i(\theta_i)$. Seller $s$'s expected utility, denoted by $U_s(\theta')$, is $\sum_{i\in N} P_i(\theta_i)$.
We also define the expected social welfare and expected revenue of $M$ over $\theta'$, denoted by $SW(\theta')$ and $RV(\theta')$, resp. 

Note that a deterministic mechanism can be treated as a special case of randomised mechanism such that the mechanism returns an allocation and a payment function  with probability $1$.
From now on, we refer to a randomised mechanism simply as a mechanism. 
Next we define several desirable properties of a mechanism. Let $\theta_{-i} \coloneqq (\theta_j)_{j \in N \backslash \{i\}}$ and $\Theta_{-i}$ be the set of all possible $\theta_{-i}$. 
Intuitively, the buyers should be incentivised to truthfully report their valuations as doing so would give her the highest and non-negative expected utility: 

\begin{itemize}[leftmargin=*]
    \item A mechanism $M$ is {\em incentive compatible} (IC) if for all $i\in N$, all $\theta_i, \theta_i' \in [\underline{\theta}, \overline{\theta}]$ and for all $\theta_{-i} \in \Theta_{-i}$, we have $U_i(\theta_i,\theta_{-i}) \geq U_i(\theta_i',\theta_{-i}).$ 
    \item  A mechanism $M$ is {\em individual rational} (IR) if for all $i \in N$ and all $\theta_{-i}\in \Theta_{-i}$, we have $U_i(\theta_i,\theta_{-i})\geq 0$. 
\end{itemize}

\subsection{Group fairness} 

In auction settings where the seller distributes items of social significance, incorporating fairness among different buyer groups becomes critical. This is particularly important when certain groups might have inherent advantages over others in winning the auction. To ensure equitable treatment across all groups, we introduce the concept of group fairness below.

For each group $N_k$, where $k\in [m]$, we define the group's social welfare $sw_k(\theta')$ under a deterministic mechanism $M=(\pi,p)$ over bids $\theta'$ by the sum of the valuations of buyers within $N_k$, i.e.,
$sw_k(\theta') \coloneqq \sum_{i\in N_k} \pi_i \theta_i.$
%
To assess a mechanism's fairness, we consider the expected social welfare $SW_k(\theta')$ for any group $N_k$. By focusing on achieving comparable expected social welfare across all groups, the mechanism promotes a fair distribution of benefits, crucial for items that serve the community or public interest.

\begin{definition}\label{def:fairness}
Given a parameter $\epsilon \geq 0$, a mechanism $M$ satisfies {\em $\epsilon$-group fairness} if for all $\theta\in \Theta$ 
we have 
\begin{equation}\label{eqn:EGF}
\max_{k,\ell \in [m]} |SW_k(\theta) - SW_\ell(\theta)| \leq \epsilon.  
\end{equation}
\end{definition}

\noindent The parameter $\epsilon$ regulates the extent of fairness deviation. A smaller value of $\epsilon$ means better fairness between groups in the mechanism. 


\medskip 
We aim to design an auction mechanism that is IC and IR, while satisfying $\epsilon$-group fairness, for reasonable $\epsilon$. 
Designing such an mechanism is not trivial, which is illustrated by a straightforward adaption of the second price auction mechanism such that it achieves group fairness. We call this mechanism as {\em simple mechanism} and it works as follows. The simple mechanism assigns a proper probability $\Pr_k \in [0,1] \in \R^+$ to each group $N_k, k\in [m]$, and selects one group according to the probability distribution.  Then it runs a second price auction in the selected group. By setting the probability of each group be  proportional to the top valuation in other groups would ensure group fairness. 
However, it turns out that this mechanism is not IC, which is shown in the following example.

\begin{example}
We consider the auction with two groups $A$ and $B$, where Group A consists of buyers $a, b$ and $c$ with valuations $9,8$ and $7$, while Group $B$ consists of $d, c$ and $e$ with valuations $7,3$ and $2$. 
We first consider the case where all buyers bids truthfully. 
In the second price auction, the item is allocated to the buyer with the highest bids within the group, i.e., either $a$ in $A$ or $d$ in $B$. We consider $\epsilon=0$. To ensure group fairness, by Equation~\eqref{eqn:EGF}, we have $v_a \Pr_A = v_d \Pr_B$. Also, we know $\Pr_A + \Pr_B=1$ and $\Pr_A \geq 0, \Pr_B \geq 0$. Then we have $\Pr_A=7/16, \Pr_B=9/16$, which defines the probability distribution in the first step. 
Under such distribution, the expected utility of buyer $d$ is $9/4$. 

Next, we assume that Buyer $d$ bids $6$ instead and the other buyers bid truthfully. Again to ensure group fairness, we have the distribution $\Pr_A=6/15, \Pr_B=9/15$. Under this distribution, the expected utility of $d$ increases to $35/15$. In other words, Buyer $d$ benefits from misreporting and thus the simple mechanism is not IC.
\end{example}


\section{Group probability mechanism}
\label{sec:probability}

Notice that the simple mechanism fails to achieve incentive compatibility because the expected allocation is increasing in the bid and the buyers have incentive to lower their bids to get higher chance to win. To circumvent this issue, we make the expected allocation of each buyer independent from her bid, which is the main idea of our first mechanism, {\em group probability mechanism} (GPM). 

\subsection{Mechanism}
We now describe GPM in detail: Given the bids of all buyers $\theta'$ and a parameter $\epsilon_G>0$, the mechanism works as follows. 

\begin{enumerate}[leftmargin=*]
    \item {\bf Bidder Partition.} The buyers are randomly assigned to exactly one of the two disjoint sets: $\Stat$ with probability $\frac{1}{2}$ and $\SecPrice$ with probability $\frac{1}{2}$. The buyers in $\SecPrice$ will participate in a second price auction while the buyers in $\Stat$ will be used to calculate the winning probability of each group (see next step). 
    \item{\bf Calculation of group winning probabilities.} This step is to use the buyers in $\Stat$ to calculate the probability $\Pr_k, k \in [m]$ that group $N_k$ wins the item. 
    Within the buyers in  set $\Stat$, assume a second price auction is run among the buyers in each group. 
    Let $w_k$ be the winner in the second price auction for group $N_k$. The probabilities $\Pr_k$ are obtained by solving Problem~\eqref{eqn:opt}, as shown below.

    \begin{subequations}
    \label{eqn:opt}
    \begin{align}
    \max_{\Pr_1,\ldots,\Pr_m} & \sum_{k=1}^{m}  p_{w_k} \Pr_k \label{obj} \\
    \text{s.t. } &  \max_{k, l \in [m]}|\Pr_k \theta_{w_k}' - \Pr_l \theta_{w_l}'| \leq \epsilon \label{con:EGF} \\
      & \sum_{k=1}^m \Pr_k  = 1 \label{con:sum}\\
      & 0 \leq \Pr_k \leq 1, \forall k \in [m] \label{eqn:nonneg}
    \end{align}    
    \end{subequations}
    The Objective~\eqref{obj} is to maximise the expected revenue while the Constraint~\eqref{con:EGF} ensures $\epsilon$-group fairness.
    \item {\bf Second price auction.} This step is to select a winner among the buyers in set $\SecPrice$ and to set the price for the winner. 
    \begin{enumerate}
        \item {\bf Group selection.} The group probability $\Pr_k, k \in [m]$ obtained from the buyers in $\Stat$ is assigned to group $N_k$ in $\SecPrice$. Then a group $N_{k^*}$ is selected randomly according to the assigned group winning probabilities. 
        \item {\bf Individual selection.} A second price auction is conducted among the buyers in group $N_{k^*}$. That is, the buyer $w$ with the highest bid in $N_{k^*}$ is allocated the item and the price is set to be the second highest bid of group $N_{k^*}$ in set $\SecPrice$, i.e., $\pi_w=1, p_w=\theta_2'$, where $\theta_2'$ is the second highest bid.    
    \end{enumerate}  
\end{enumerate}

The whole process is shown in Algorithm~\ref{alg:GPM}. 

\begin{algorithm}[h]
\footnotesize
\caption{Group Probability Mechanism}
\label{alg:GPM}
    \begin{algorithmic}[1]
    \Require bids of all buyers $\theta'$ and group fairness parameter $\epsilon_G$
    \Ensure allocation result $\pi(\theta')$ and payment result $p(\theta')$
    \State initialise  $\Stat = \emptyset$, $\SecPrice = \emptyset$, $\pi(\theta')=\mathbf{0}$, and $p(\theta')= \mathbf{0}$  
    \State allocate each buyer into $\Stat$ with prob. $\frac{1}{2}$ and $\SecPrice$ with prob. $\frac{1}{2}$ 
    \State calculate the group winning prob. $\Pr_k$ for each group $k \in [m]$ by solving Problem~\eqref{eqn:opt} over $\Stat$
    \State assign the group winning prob. $\Pr_k$ to each group $N_k$ in $\SecPrice$
    \State select a group $N_{k^*}$ in $\SecPrice$ according to the group winning prob. distri.
    \State run a second price auction in group $N_{k^*}$ and set $\pi_{w^*}=1$ and $p_{w^*}=\theta_2'$ for winner $w^*$ 
    \State \Return $\pi(\theta')$ and $p(\theta')$
    \end{algorithmic}
\end{algorithm}

Now we show how we solve Problem~\eqref{eqn:opt} over set $\Stat$.  We use a general method for solving optimisation programs, the Lagrange multiplier method. Let $\lambda_{kl}, k\neq l, k,l\in [m]$, $\mu$ and $\gamma_{k}, \tau_{k}, k \in [m]$ be the Lagrange multipliers. The Lagrange function is 
\begin{align*}
& L(\Pr_1,\ldots, \Pr_k, \lambda_{12},\ldots, \lambda_{m-1,m}, \mu, \gamma_{1},\ldots,\gamma_{k},\tau_{1},\ldots,\tau_{k}) \\
=& \sum_{k\in [m]} p_{w_k} \Pr_{k}  
- \sum_{k,l \in [m]} \lambda_{k,l}\left(|\Pr_k \theta_{w_k}' - \Pr_l \theta_{w_l}'| - \epsilon\right) \\
& - \mu \left(\sum_{k\in [m]} \Pr_k  - 1\right) 
- \sum_{k\in [m]}\gamma_{k} (- \Pr_k) - \sum_{k\in [m]}\tau_{k} (\Pr_k-1)
\end{align*}
Then we get the solution to Problem~\eqref{eqn:opt} by solving the following system. 
\begin{align*}
& \partial  L/ \partial \Pr_k=0, k \in [m]\\
&|\Pr_k \theta_{w_k}' - \Pr_l \theta_{w_l}'| - \epsilon = 0, \forall 1\leq k,l \leq m \\
&\sum_{k=1}^m \Pr_k  - 1 =0 \\
& \gamma_{k} (- \Pr_k) =0, \forall 1\leq k \leq m \\
& \tau_{k} (\Pr_k-1) =0, \forall 1\leq k \leq m  \\
& \lambda_{12},\ldots, \lambda_{m-1,m}, \mu, \gamma_{1},\ldots,\gamma_{k},\tau_{1},\ldots,\tau_{k} \geq 0
\end{align*}
\subsection{Analysis}
Next we show that GPM have several desirable properties. 
Before that, we present a theorem stating the sufficient conditions for a mechanism being IC and IR. 

\begin{theorem}[\cite{archer2001truthfulMF}]
\label{thm:ICIR}
A mechanism $M(\pi,p)$ is incentive compatible and individully rational if and only if for every $i\in N$, 
\begin{enumerate}
    \item $\Pi_i(\theta')$ is monotonically non-decreasing in $\theta_i'$, 
    \item $\int_{\underline{\theta}}^{\overline{\theta}} \Pi(x,\theta_{-i}')dx <  \infty$ for all $\theta_{-i}'$, and
    \item $P_i(\theta')=\theta_i' \Pi_i(\theta') -\int_{\underline{\theta}}^{\theta_{i}'}\Pi_i(x,\theta_{-i}')dx$.  \qed
\end{enumerate}
\end{theorem}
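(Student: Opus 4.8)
This is the single-parameter characterisation of truthfulness, and I would prove it by fixing a buyer $i$ together with a profile $\theta'_{-i}$ of the other buyers' bids, so that everything reduces to a one-dimensional problem in buyer $i$'s reported value. Write $U_i(\theta_i,x) := \theta_i \Pi_i(x,\theta'_{-i}) - P_i(x,\theta'_{-i})$ for the expected utility buyer $i$ gets when her true value is $\theta_i$ and she reports $x$, and let $g(\theta_i) := U_i(\theta_i,\theta_i)$ be her truthful (equilibrium) utility. In this notation IC is the family of inequalities $g(\theta_i)\ge U_i(\theta_i,x)$ for all $\theta_i,x\in[\underline{\theta},\overline{\theta}]$, and IR is $g(\theta_i)\ge 0$ for all $\theta_i$. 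I would establish the two implications separately.

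For the ``if'' direction, assume (1)--(3). Substituting the payment identity (3) into the utility gives $U_i(\theta_i,x) = (\theta_i-x)\Pi_i(x,\theta'_{-i}) + \int_{\underline{\theta}}^{x}\Pi_i(t,\theta'_{-i})\,dt$, and hence $g(\theta_i)-U_i(\theta_i,x) = \int_{x}^{\theta_i}\bigl(\Pi_i(t,\theta'_{-i})-\Pi_i(x,\theta'_{-i})\bigr)\,dt$. By the monotonicity (1) the integrand is $\ge 0$ for $t\ge x$ and $\le 0$ for $t\le x$, so this integral is non-negative whether $\theta_i\ge x$ or $\theta_i\le x$; thus $M$ is IC, and (2) guarantees every integral here is finite so the expression is well defined. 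Finally $g(\theta_i)=\int_{\underline{\theta}}^{\theta_i}\Pi_i(t,\theta'_{-i})\,dt\ge 0$ since $\Pi_i\ge 0$, giving IR.

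For the ``only if'' direction, assume $M$ is IC and IR. Writing out IC for a buyer of value $\theta_i$ deviating to $x$, and for a buyer of value $x$ deviating to $\theta_i$, yields the two-sided bound $(\theta_i-x)\Pi_i(x,\theta'_{-i}) \le g(\theta_i)-g(x) \le (\theta_i-x)\Pi_i(\theta_i,\theta'_{-i})$. Taking $\theta_i>x$ and comparing the outer terms gives $(\theta_i-x)\bigl(\Pi_i(\theta_i,\theta'_{-i})-\Pi_i(x,\theta'_{-i})\bigr)\ge 0$, which is (1); and since $\Pi_i(\cdot,\theta'_{-i})$ is then monotone and bounded in $[0,1]$ it is Riemann integrable on $[\underline{\theta},\overline{\theta}]$, which is (2). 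The same bound shows $0\le g(\theta_i)-g(x)\le \theta_i-x$ for $\theta_i>x$, so $g$ is $1$-Lipschitz, hence absolutely continuous with $g'=\Pi_i(\cdot,\theta'_{-i})$ at every continuity point of $\Pi_i(\cdot,\theta'_{-i})$ (all but countably many); integrating gives $g(\theta_i)=g(\underline{\theta})+\int_{\underline{\theta}}^{\theta_i}\Pi_i(t,\theta'_{-i})\,dt$. Solving $g(\theta_i)=\theta_i\Pi_i(\theta_i,\theta'_{-i})-P_i(\theta_i,\theta'_{-i})$ for the payment then gives (3) up to the additive constant $g(\underline{\theta})=\underline{\theta}\,\Pi_i(\underline{\theta},\theta'_{-i})-P_i(\underline{\theta},\theta'_{-i})$, which IR forces to be non-negative and which the standard normalisation (binding the lowest type's participation constraint, without loss for any revenue consideration) sets to $0$.

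I expect the only genuine obstacle to be the real-analysis step in the ``only if'' direction: turning the squeeze inequalities into the integral identity for $g$ uses that a bounded monotone function is Riemann integrable and that a Lipschitz (absolutely continuous) function equals the integral of its almost-everywhere derivative; the rest is algebraic manipulation of the IC inequalities. A secondary point I would flag explicitly, rather than gloss over, is the additive constant $g(\underline{\theta})$ above: (1)--(3) characterise IC together with the participation constraint binding at $\underline{\theta}$, so I would state that normalisation at the outset and note it follows from IR plus the customary convention.
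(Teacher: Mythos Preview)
The paper does not actually prove this theorem: it is quoted from \cite{archer2001truthfulMF} and marked with a \qed\ immediately after the statement, then used as a black box to verify that GPM and GSM are IC and IR. So there is no paper-proof to compare against.

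Your argument is the standard Myerson/Archer--Tardos single-parameter proof and is correct. The ``if'' direction is clean. In the ``only if'' direction you correctly derive the sandwich
\[
(\theta_i-x)\,\Pi_i(x,\theta'_{-i}) \;\le\; g(\theta_i)-g(x) \;\le\; (\theta_i-x)\,\Pi_i(\theta_i,\theta'_{-i}),
\]
get monotonicity and the envelope formula $g(\theta_i)=g(\underline{\theta})+\int_{\underline{\theta}}^{\theta_i}\Pi_i(t,\theta'_{-i})\,dt$, and then recover the payment identity up to the constant $g(\underline{\theta})$. Your closing caveat is exactly the right thing to flag: IC alone only pins down $P_i$ up to an additive constant, and IR only forces $g(\underline{\theta})\ge 0$, not $=0$. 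The equality form in condition (3) therefore encodes the normalisation that the lowest type's participation constraint binds; as written, the ``only if'' direction of the theorem is literally true only under that convention, and you are right to make it explicit.
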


Then we prove that GPM is IC and IR by showing that it satisfies the conditions listed in Theorem~\ref{thm:ICIR}. 

\begin{lemma}
\label{lem:GPIC&IR}
The group probability mechanism is incentive compatible and individually rational. 
\end{lemma}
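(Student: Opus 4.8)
The plan is to verify the three conditions of Theorem~\ref{thm:ICIR} for every buyer $i$, exploiting the decoupling that GPM builds in through its random partition. The key structural observation is that $i$'s own bid $\theta_i'$ is irrelevant to almost everything: if $i$ is placed in $\Stat$ (probability $\tfrac12$), then $i$ never receives the item and never pays, so her conditional expected allocation and payment are both identically $0$; and if $i$ is placed in $\SecPrice$, then the group winning probabilities $\Pr_1,\dots,\Pr_m$ are obtained by solving Problem~\eqref{eqn:opt} using only the bids of the buyers in $\Stat$, a set that does not contain $i$, so the $\Pr_k$ do not depend on $\theta_i'$. Hence the only channel through which $\theta_i'$ affects $i$'s outcome is the within-group second price auction among the $\SecPrice$ buyers of her group. (If Problem~\eqref{eqn:opt} admits several optima, fix a deterministic selection rule; it still depends only on $\Stat$.)

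First I would fix the realised partition and suppose $i\in\SecPrice$ with $i\in N_k$. Write $b$ for the highest bid among the other members of $N_k\cap\SecPrice$, with the convention that $b=\underline\theta$ (the reserve) when $i$ is the only such member, and fix some deterministic tie-breaking rule. Then the expected allocation of $i$ conditioned on this partition is $\Pr_k\cdot\mathbf{1}[\theta_i'\ge b]$ and her conditional expected payment is $\Pr_k\cdot b\cdot\mathbf{1}[\theta_i'\ge b]$, because group $N_k$ is drawn with probability $\Pr_k$ and, given that, $i$ wins exactly when she holds the top bid in $N_k\cap\SecPrice$ and then pays the runner-up bid $b$. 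Both expressions are non-decreasing in $\theta_i'$ since $b$ and $\Pr_k$ are constants with respect to $\theta_i'$. Because $\Pi_i(\theta')$ is the average of these conditional expectations over the random partition, together with the value $0$ contributed by the event $i\in\Stat$, and the averaging weights (the partition probabilities) do not depend on $\theta_i'$, the function $\Pi_i$ is non-decreasing in $\theta_i'$, which is Condition~1. Condition~2 is immediate: $0\le\Pi_i\le1$ and the domain $[\underline\theta,\overline\theta]$ is bounded, so the integral is at most $\overline\theta-\underline\theta<\infty$.

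For Condition~3 I would first check the payment identity at the conditional level. A one-line case split gives $\theta_i'\bigl(\Pr_k\mathbf{1}[\theta_i'\ge b]\bigr)-\int_{\underline\theta}^{\theta_i'}\Pr_k\mathbf{1}[x\ge b]\,dx=\Pr_k\,b\,\mathbf{1}[\theta_i'\ge b]$, treating $\theta_i'\ge b$ and $\theta_i'<b$ separately and using once more that $b$ and $\Pr_k$ do not depend on $\theta_i'$; on the event $i\in\Stat$ both sides are $0$. Averaging this identity over the random partition and pushing the expectation through the integral (valid since the integrand is bounded) yields $P_i(\theta')=\theta_i'\Pi_i(\theta')-\int_{\underline\theta}^{\theta_i'}\Pi_i(x,\theta_{-i}')\,dx$, which is Condition~3. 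Theorem~\ref{thm:ICIR} then delivers IC and IR simultaneously.

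I expect the only real friction to be bookkeeping rather than mathematics: being careful about ties in the within-group second price auctions so that the conditional allocation is genuinely monotone (a fixed deterministic tie-break suffices, and the identity is unaffected since the tie set has measure zero in the integral), and handling the degenerate cases where $\SecPrice$ is empty or a group contributes fewer than two buyers to $\SecPrice$, where ``the second highest bid'' must be read as the reserve $\underline\theta$ for the payment identity to hold exactly. Once these conventions are pinned down, all three conditions reduce to the elementary computations above.
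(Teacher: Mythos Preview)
Your proposal is correct and follows essentially the same route as the paper: verify the three conditions of Theorem~\ref{thm:ICIR} by exploiting that the random partition and the group probabilities $\Pr_k$ (computed from $\Stat$) are independent of $\theta_i'$, leaving only the within-group second-price step to depend on $i$'s bid. Your treatment is in fact more careful than the paper's---you condition on the full partition before averaging and flag the tie-breaking and degenerate-group conventions---but the underlying argument is the same.
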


\begin{proof}
Consider an arbitrary buyer $i$ associated with valuation $\theta_i$ and group $N_k$. 
It is straightforward to see that for buyer $i$ the expected allocation $\Pi_i$ is monotonically non-decreasing in her bid $\theta_i'$. Firstly, whether $i$ is allocated to set $\SecPrice$ or $\Stat$ in Step 1 and whether the group $N_k$ is selected in Step 3(a) are independent from the $i$'s bid $\theta_i'$. Also, $\Pi_i$ is increasing in her bid in the second price auction of Step 3(b). 

Next we show that the expected payment of buyer $i$ is exactly in the form of Theorem~\ref{thm:ICIR}. 
By the payment rule of our mechanism, we have the expected payment as $P_i(\theta') = \frac{1}{2} \Pr_{k} \theta_2'$. 
Then by Thm.~\ref{thm:ICIR}, the expected payment should be 
$P_i(\theta')  =  \theta_i' \Pi_i(\theta_i') - \int_{\overline{\theta}}^{\theta_i'} \Pi_i(x) d x  
= \theta_i'  \Pi_i(\theta_i') - \left(\int_{\theta_2'}^{\theta_i'} \Pi_i(x) d x + \int_{\underline{\theta}}^{\theta_2'} 0 d x \right) 
=  \theta_i' \times \frac{1}{2} \Pr_k - \frac{1}{2} \Pr_k (\theta_i' - \theta_2') = \frac{1}{2} \Pr_k \theta_2'$.

Lastly, it is easy to see the integral on $\Pi_i$ is finite. 
\end{proof}

\begin{lemma}
\label{lem:GPGF}
The group probability mechanism asymptotically satisfies $\epsilon$-group fairness when the number of buyers $n \to \infty$. 
\end{lemma}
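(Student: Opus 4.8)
The plan is to analyse GPM at the truthful profile --- legitimate since, by Lemma~\ref{lem:GPIC&IR}, the mechanism is IC, so we may put $\theta'=\theta$ --- and to show that, once every group is large, the $\epsilon_G$-fairness constraint~\eqref{con:EGF} that program~\eqref{eqn:opt} imposes on the $\Stat$ side carries over, up to a vanishing error, to the group welfares realised on the $\SecPrice$ side. This forces a stochastic, large-market reading: I will take the $\theta_i$ to be i.i.d.\ from a fixed distribution $F$ with support $[\underline{\theta},\overline{\theta}]$ (or any comparable non-degeneracy assumption), with each group size $|N_k|$ growing with $n$. Some such hypothesis is unavoidable: for a worst-case profile in which one buyer of $N_k$ has a huge valuation and the others tiny ones, the partition step alone drives $SW_k$ and $SW_\ell$ apart without bound.

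First I would rewrite the group welfare. Conditioned on the random partition, \eqref{eqn:opt} on $\Stat$ is feasible (take $\Pr_k\propto 1/\theta_{w_k}$, which equalises all the products), so a maximiser $\Pr_1,\dots,\Pr_m$ exists and, with $\alpha_k\coloneqq\theta_{w_k}=\max_{i\in N_k\cap\Stat}\theta_i$, obeys $|\Pr_k\alpha_k-\Pr_\ell\alpha_\ell|\le\epsilon_G$ on every realisation --- which, together with $0\le\Pr_k\le1$, is all I will use about $\Pr_k$. Since group $N_k$ wins with probability $\Pr_k$ and then awards the item to the top bidder of $N_k\cap\SecPrice$, we get, with $\beta_k\coloneqq\max_{i\in N_k\cap\SecPrice}\theta_i$,
\[
SW_k(\theta)=\E\big[\Pr_k\,\beta_k\big],
\]
the expectation over the partition, with $sw_k=0$ on the (probability-$2^{-|N_k|}$, negligible) events $N_k\cap\Stat=\emptyset$ or $N_k\cap\SecPrice=\emptyset$.

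The technical heart is a concentration argument for the order statistics $\alpha_k$ and $\beta_k$. For a typical profile (a set of $\theta$ of probability $\to1$ under $F^n$) and any $\delta>0$, the number of buyers of $N_k$ with valuation above $\overline{\theta}-\delta$ is of order $|N_k|\,(1-F(\overline{\theta}-\delta))\to\infty$, so conditionally on such a profile both $\alpha_k$ and $\beta_k$ exceed $\overline{\theta}-\delta$ with partition-probability $1-o(1)$; since all quantities lie in $[\underline{\theta},\overline{\theta}]$ and $m$ is fixed, this upgrades to $\E[\overline{\theta}-\alpha_k]\to0$ and $\E[\overline{\theta}-\beta_k]\to0$. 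Inserting $\alpha_k\to\overline{\theta}$ into the pointwise constraint and using $\Pr_k\le1$ gives $\overline{\theta}\,\E\big[|\Pr_k-\Pr_\ell|\big]\le\epsilon_G+o(1)$; inserting $\beta_k\to\overline{\theta}$ into the welfare expression, again with $\Pr_k\le1$, gives $SW_k(\theta)=\overline{\theta}\,\E[\Pr_k]+o(1)$.

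Finally, combining the two,
\[
|SW_k(\theta)-SW_\ell(\theta)|=\overline{\theta}\,\big|\E[\Pr_k-\Pr_\ell]\big|+o(1)\le\overline{\theta}\,\E\big[|\Pr_k-\Pr_\ell|\big]+o(1)\le\epsilon_G+o(1),
\]
and maximising over the finitely many pairs $k,\ell\in[m]$ gives $\max_{k,\ell}|SW_k(\theta)-SW_\ell(\theta)|\le\epsilon_G+o(1)$ as $n\to\infty$, the claimed asymptotic $\epsilon_G$-group fairness. I expect the crux to be exactly this transfer: $\Pr_k$ is determined by $\Stat$ while $\beta_k$ is determined by the anti-correlated complement $\SecPrice$, so $\E[\Pr_k\beta_k]$ does not factor; what saves the argument is that $\alpha_k$ and $\beta_k$ are both squeezed against the common ceiling $\overline{\theta}$, which decouples them in the limit and lets the $\Stat$-side bound migrate to the $\SecPrice$-side welfare. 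A secondary point is to state precisely what ``asymptotically'' means --- for a.e.\ sequence of valuation profiles, or in expectation over profiles --- and to record explicitly the assumption that no group stays bounded.
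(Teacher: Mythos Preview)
Your argument is correct and, at the level of ideas, more explicit than the paper's. Both proofs rest on the same premise---that $\Stat$ and $\SecPrice$ are two large samples from the same group-wise population, so whatever fairness the constraint~\eqref{con:EGF} enforces on the $\Stat$ side transfers to the $\SecPrice$ side in the limit---but the execution differs. The paper invokes the law of large numbers directly on $SW_k$, asserting that it converges almost surely to a common limit $\mu_k$ on both partitions; since the $\Stat$ side satisfies $|SW_k-SW_\ell|\le\epsilon$ by construction, so must the limit, and hence so does the $\SecPrice$ side. Your route is more concrete: you recognise that $SW_k=\E[\Pr_k\beta_k]$ couples a $\Stat$-measurable quantity ($\Pr_k$) with an anti-correlated $\SecPrice$-measurable one ($\beta_k$), and you decouple them by pushing both $\alpha_k$ and $\beta_k$ against the common ceiling $\overline{\theta}$ via an order-statistics argument. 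This buys you a clean, self-contained estimate---the LLN invocation in the paper is somewhat loose, since $SW_k$ is a maximum times a data-dependent probability rather than a sample mean---and it also makes transparent which stochastic hypotheses are actually needed (each $|N_k|\to\infty$, support reaching $\overline{\theta}$). The paper's version is shorter but leaves the mode of convergence and the role of the random partition under-specified; yours fills in exactly those gaps, and your closing remarks about what ``asymptotically'' should mean are on point.
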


\begin{proof}
W.l.o.g., we assume that Groups $k,l$ have the maximum difference in expected social welfare. Let $\mathcal{D}_k$ and $\mathcal{D}_l$ denote the distributions of the valuations in Groups $k$ and $l$, resp. Let $\theta_{N_k}$ and $\theta_{N_l}$ be the valuations of buyers in Groups $k$ and $l$, resp.   
Then we let $\mu_k \coloneqq \int_{\theta_{N_k} \sim \mathcal{D}_k} \int_{\Pi_i\sim M} \Pi_i \theta_i$ and $\mu_l \coloneqq \int_{\theta_{N_l} \sim \mathcal{D}_l} \int_{\Pi_i\sim M} \Pi_i \theta_i$. 

By law of large numbers, we have $\Pr[\lim_{n\to \infty} SW_k = \mu_k]=1$ and $\Pr[\lim_{n\to \infty} SW_l = \mu_l]=1$ for both sets $\Stat$ and $\SecPrice$. Then we have 
\begin{align*}
&\Pr\left[\lim_{n\to \infty} |SW_k (\theta'_{\SecPrice}) - SW_l (\theta'_{\SecPrice})|= |\mu_k - \mu_l|\right]  = 1, \\
&\Pr\left[\lim_{n\to \infty} |SW_k (\theta'_{\Stat}) - SW_l (\theta'_{\Stat})| = |\mu_k - \mu_l|\right] =  1.   
\end{align*}

Also, as the probabilities are derived from Problem~\eqref{eqn:opt}, we have  
$|SW_k (\theta'_{\Stat}) - SW_l (\theta'_{\Stat})| = |\mu_k - \mu_l| = \epsilon.$
Then we have 
$\Pr\left[\lim_{n\to \infty} |SW_k (\theta'_{\SecPrice}) - SW_l (\theta'_{\SecPrice})|= \epsilon \right]  = 1.$
\end{proof}

\section{Group score mechanism}
\label{sec:score}

Notice that even though GPM ensures IC, IR and group fairness, it is not fair at individual level. That is, the majority of the buyers (except for the buyer with the highest value in each group) has the expected utility of $0$. To alleviate the unfairness between individuals, we propose our second mechanism, named {\em group score mechanism} (GSM).
Different from GPM that assigns a probability to each group and then conducts an auction within the selected group, leading a zero winning probability for most of the buyers, GSM assigns a non-zero probability to all buyers. In this way, the difference in the expected utilities between the winner and other buyers is narrowed down and thus the individual unfairness is alleviated.

The key in the design of GSM is that it calibrates the expected social welfare of each group by a {\em score function}. A score function, $\sigma: [\underline{\theta},\overline{\theta}] \to \R^+$, is non-decreasing in bid. We design a score function for each group, which is shared by the buyers in the group. The underlying idea is that the score function aligns the diverse valuation ranges of different groups into comparable scales, thereby equalising the winning chances for both disadvantaged and advantaged groups.

\subsection{Mechanism}

\begin{algorithm}[t]
\caption{Group Score Mechanism}
\footnotesize
\label{alg:GSM}
    \begin{algorithmic}[1]
    \Require bids of all buyers $\theta'$ and group fairness parameter $\epsilon_G$ 
    \Ensure allocation result $\pi(\theta')$ and payment result $p(\theta')$
    \State initialise  $\Stat = \emptyset$, $\SecPrice = \emptyset$, $\pi(\theta')=\mathbf{0}$, and $p(\theta')= \mathbf{0}$  
    \State allocate each buyer into $\Stat$ with prob. $\frac{1}{2}$ and $\Auction$ with prob. $\frac{1}{2}$ 
    \State learn the score functions $\sigma_k$ for each group $k \in [m]$ over $\Stat$
    \State calculate the individual winning prob. $\Pi_i$ of each buyer $i\in N$ by Equation~\eqref{eqn:Pi} 
    \State select a winner $i$ according to the individual winning prob. distr.
    \State set the payment of the winner $i$ by Equation~\eqref{eqn:p}
    \State \Return $\pi(\theta')$ and $p(\theta')$
    \end{algorithmic}
\end{algorithm}

GSM takes the bids of all buyers $\theta'$, and a group fairness parameter $\epsilon$ 
and returns the allocation and payment results of the buyers. 

\begin{enumerate}[leftmargin=*]
\item {\bf Bidder Partition.} The Buyers are randomly allocated into exactly one of the two disjoint sets: $\Stat$ with probability $\frac{1}{2}$ and $\Auction$ with probability $\frac{1}{2}$. The buyers in $\Auction$ will participant in an auction while the buyers in $\Stat$ will be used to calculate the score functions. 

\item{\bf Calculation of group score functions.} This step is to use the buyers in $\Stat$ to calculate the score functions $\sigma_k$ for each group $k \in [m]$. Assume an auction in Step (3) is conducted within the buyers in set $\Stat$. The optimal group score functions can be obtained by solving Problem~\eqref{eqn:opt2} below.

\begin{subequations}
    \label{eqn:opt2}
    \begin{align}
    \max_{\sigma_1,\ldots,\sigma_m} & \sum_{i\in N} p_i \Pi_i \label{obj2} \\
    \text{s.t. } &  \max_{k, l \in [m]}|\sum_{i\in N_k} v_i \Pi_i- \sum_{i\in N_l} v_i \Pi_i | \leq \epsilon \label{con:EGF2} \\
      & \Pi_i(\theta_i',\theta_{-i}') \leq \Pi_i(\theta_i'',\theta_{-i}'), \forall \theta_i'\leq \theta_i'' \label{con:inc2} \\
      & \sum_{i\in \Auction} \Pi_i  = 1 \label{con:sum2}\\
      & 0 \leq \Pi_i  \leq 1, \forall i \in N, \forall \theta_i'\in [\underline{\theta},\overline{\theta}] \label{eqn:nonneg2} \\
      & 0 \leq \int_{\underline{\theta}}^{\theta_i'} \Pi_i(x,\theta_{-i}) dx \leq +\infty, \forall i\in N \label{eqn:finite}
    \end{align}    
\end{subequations}

The Objective~\eqref{obj2} is also to maximise the expected revenue. 
The problem is hard to solve, and thus we deploy the dual ascent algorithm \cite{boyd2011distributed} to get the solution. We present the details in Sec.~\ref{sec:learning}.
 
\item{\bf Auction.} This step is to select a winner among the buyers in $\Stat$ and to determine the payment of the winner. 
\begin{enumerate}
    \item {\bf Calculation of individual winning probabilities.} In this step, the mechanism assigns each buyer $i$ a probability $\Pi_i$ that defines the chance buyer $i$ wins the item. Given the score functions $\sigma_1,\ldots,\sigma_m$, the winning probability of $i$ in group $N_k$ is
    \begin{equation}
    \label{eqn:Pi}
    \Pi_i(\theta')= \frac{\sigma_k(\theta_i')}{\sum_{l=1}^m \sum_{j\in N_l} \sigma_l (\theta_j') }.
    \end{equation}
    \item {\bf Winner selection and payment setting. } Given the obtained probability distribution, we randomly select a buyer $i$ as the winner. The payment of $i$ is set to be 
    \begin{equation}
    \label{eqn:p}
    p_i = \theta_i' - \int_{\underline{\theta}}^{\theta_i'} \Pi_i(x,\theta_{-i}') dx / \Pi_i(\theta').
    \end{equation}
\end{enumerate}
\end{enumerate}

Algorithm~\ref{alg:GSM} shows the whole process of GSM.

\subsection{Analysis}

\begin{lemma}
\label{lem:GSIC&IR}
    The group score mechanism is incentive compatible and individually rational. 
\end{lemma}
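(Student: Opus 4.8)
The plan is to verify the three conditions of Theorem~\ref{thm:ICIR} for GSM, following the same route as the proof of Lemma~\ref{lem:GPIC&IR}. Fix an arbitrary buyer $i$ with true valuation $\theta_i$ in group $N_k$, and fix the other buyers' reported bids $\theta_{-i}'$; everything below is read as a function of $i$'s own bid $\theta_i'$. For the monotonicity condition, note first that the placement of $i$ into $\Stat$ or $\Auction$ in Step~1 happens with probability $\frac{1}{2}$ each, independently of $\theta_i'$, and $i$'s winning chance conditioned on $i\in\Stat$ does not depend on $\theta_i'$; so it is enough to show that the probability in Equation~\eqref{eqn:Pi} is non-decreasing in $\theta_i'$ conditioned on $i\in\Auction$. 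Since the score functions $\sigma_1,\dots,\sigma_m$ are computed only from the buyers in $\Stat$, they do not depend on $\theta_i'$ once $i\in\Auction$; writing $C\ge 0$ for the (then fixed) total score of all buyers other than $i$, Equation~\eqref{eqn:Pi} becomes $\Pi_i=\sigma_k(\theta_i')/(\sigma_k(\theta_i')+C)$, and because $x\mapsto x/(x+C)$ is non-decreasing on $\R^+$ and $\sigma_k$ is non-decreasing by construction, $\Pi_i$ is non-decreasing in $\theta_i'$. (This is also imposed as Constraint~\eqref{con:inc2}.)

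The second condition is immediate: $\Pi_i\in[0,1]$ gives $\int_{\underline{\theta}}^{\overline{\theta}}\Pi_i(x,\theta_{-i}')\,dx\le\overline{\theta}-\underline{\theta}<\infty$ (this is Constraint~\eqref{eqn:finite}). For the third condition, observe that the payment $p_i$ in Equation~\eqref{eqn:p} is charged only when $i$ wins, so the expected payment $P_i(\theta')$ equals $i$'s winning probability times $p_i$; substituting Equation~\eqref{eqn:p} yields $P_i(\theta')=\theta_i'\,\Pi_i(\theta')-\int_{\underline{\theta}}^{\theta_i'}\Pi_i(x,\theta_{-i}')\,dx$, which is exactly the payment expression of Theorem~\ref{thm:ICIR}(3) (the factor $\frac{1}{2}$ from the partition cancels, or never enters, since it appears in both numerator and denominator of $p_i$). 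Theorem~\ref{thm:ICIR} then yields IC and IR.

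The main obstacle is the careful bookkeeping rather than any deep argument: one must cleanly separate $i$'s realised, winner-only payment from the \emph{expected} payment that Theorem~\ref{thm:ICIR} refers to, and justify that conditioning on $i\in\Auction$ renders both the score functions and the denominator of Equation~\eqref{eqn:Pi} independent of $\theta_i'$. With those points settled, all three conditions reduce to the defining monotonicity of the score function and elementary algebra.
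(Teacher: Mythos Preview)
Your proof is correct and follows exactly the same strategy as the paper's own proof: verify the three conditions of Theorem~\ref{thm:ICIR} by (i) observing that the score functions are fixed from $\Stat$ and hence independent of $\theta_i'$, so $\Pi_i$ is monotone because $\sigma_k$ is; and (ii) multiplying the winner-only payment \eqref{eqn:p} by $\Pi_i$ to recover the Myerson-style expected-payment identity. If anything, your write-up is more careful than the paper's on two points the paper glosses over---the $x\mapsto x/(x+C)$ argument handling the fact that $\theta_i'$ also appears in the denominator of \eqref{eqn:Pi}, and the clean separation of the realised payment $p_i$ from the expected payment $P_i$.
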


\begin{proof}
The IC and IR properties of GSM can also be proved by showing that the expected allocation and expected payment satisfies the conditions in Theorem~\ref{thm:ICIR}. We first show that the expected allocation is non-decreasing. Consider an arbitrary buyer $i\in N$ in group $N_k$. The group score function $\sigma_k$ is determined by her group $k$ and the bids of set $\Stat$, and thus independent from her bid. Also, the group score function $\sigma_k$ is non-decreasing in her bid $\theta_i'$ be default. Therefore, her  $\Pi_i(\theta')$ in Eqn.~\eqref{eqn:Pi} is non-increasing in $\theta_i'$. 

Then, the expected payment of $i$ is $p_i \times \Pi_i(\theta') = \theta_i' \Pi_i(\theta') - \int_{\underline{\theta}}^{\theta_i'} \Pi_i(x,\theta_{-i}') dx$, which meets in Condition (3) of Theorem~\ref{thm:ICIR}. 
\end{proof}

\begin{lemma}
\label{lem:GSGF}
The group score mechanism asymptotically satisfies $\epsilon$-group fairness when $n \to \infty$. \qed
\end{lemma}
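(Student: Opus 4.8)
The plan is to mirror the structure of the proof of Lemma~\ref{lem:GPGF} for GPM, since GSM has the same two-phase design: half the buyers go into $\Stat$ (used to fit the group score functions), half go into $\Auction$ (used to run the actual randomised allocation). The key observation is that the optimisation Problem~\eqref{eqn:opt2} is solved on $\Stat$ so that Constraint~\eqref{con:EGF2} holds there with the group-welfare gap bounded by $\epsilon$; what must be shown is that the same bound is inherited, in the limit, by the realised expected welfares on $\Auction$.

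First I would fix the two groups $k,\ell$ attaining $\max_{k,\ell}|SW_k(\theta')-SW_\ell(\theta')|$ and introduce, for each group, the population quantity $\mu_k \coloneqq \E_{\theta_{N_k}\sim\mathcal D_k}\E_{\Pi_i\sim M}[\Pi_i\theta_i]$, exactly as in Lemma~\ref{lem:GPGF}. Since the assignment into $\Stat$ versus $\Auction$ is an independent fair coin per buyer, and the score functions $\sigma_1,\dots,\sigma_m$ depend only on $\Stat$, conditionally on $\Stat$ the contributions of the $\Auction$-buyers to $SW_k(\theta'_{\Auction})$ are independent across buyers with a common per-group mean; the strong law of large numbers then gives $\Pr[\lim_{n\to\infty} SW_k(\theta'_{\Auction}) = \mu_k] = 1$ and likewise for $\Stat$ and for group $\ell$. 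Applying this to the difference yields $\Pr[\lim_{n\to\infty}|SW_k(\theta'_{\Auction})-SW_\ell(\theta'_{\Auction})| = |\mu_k-\mu_\ell|]=1$ and the same statement for $\Stat$. Because Problem~\eqref{eqn:opt2} enforces Constraint~\eqref{con:EGF2} on $\Stat$, we get $|\mu_k-\mu_\ell| \le \epsilon$ in the limit, and hence $\Pr[\lim_{n\to\infty}\max_{k,\ell}|SW_k(\theta'_{\Auction})-SW_\ell(\theta'_{\Auction})|\le\epsilon]=1$, which is the asymptotic $\epsilon$-group fairness claim.

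The subtle point, and the step I expect to be the main obstacle, is the interchange of limits hidden in ``the score functions converge so the $\mu_k$ from $\Stat$ and from $\Auction$ agree.'' Unlike GPM, where the group probabilities are a bounded finite-dimensional vector, here $\sigma_k$ is a whole function learned from $\Stat$, so one needs the empirical problem on $\Stat$ to converge (in an appropriate sense — say uniform convergence of $\sigma_k$ on the compact support $[\underline\theta,\overline\theta]$, or at least convergence of the induced per-group expected welfare functionals) to the population-optimal score functions, and one needs this convergence to be uniform enough that plugging the $\Stat$-fitted $\sigma_k$ into the $\Auction$ population still drives the welfare gap to the same $\epsilon$. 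I would handle this by invoking the normalisation structure of Equation~\eqref{eqn:Pi} (the $\Pi_i$ are continuous bounded functionals of the scores, and the denominator concentrates), so that a uniform law of large numbers over the score-function class suffices; since the paper treats the learning step only heuristically via dual ascent, I would state this as the working assumption that the fitted score functions converge to an optimal solution of the population version of Problem~\eqref{eqn:opt2}, under which the argument above closes exactly as in Lemma~\ref{lem:GPGF}. The remaining details — measurability, finiteness of the integrals in~\eqref{eqn:finite}, and the fact that $\max$ over finitely many pairs commutes with the almost-sure limit — are routine.
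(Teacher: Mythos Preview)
Your proposal is correct and matches the paper's approach exactly: the paper's entire proof of Lemma~\ref{lem:GSGF} is the one-line remark that it ``can be easily proved following the arguments for Lemma~\ref{lem:GPGF},'' which is precisely the mirroring you carry out. Your flagged subtlety about convergence of the learned score functions is more careful than anything the paper provides, since the paper simply does not address it.
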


The $\epsilon$-group fairness of GSM can be easily proved following the arguments for Lemma~\ref{lem:GPGF}.


\subsection{Learning score functions}
\label{sec:learning}

Now the key is to find proper group score functions for all groups by solving Problem~\eqref{eqn:opt2}. We apply QMIX \cite{rashid2020monotonic} and neural-network techniques to design parameterised group score functions, and learn its parameters using dual ascent algorithm \cite{boyd2011distributed}. 

\noindent{\bf Design and parmetrisation of the score function.} The group score function $\sigma_k, k \in [m]$ is designed as $\sigma_k = a_k(b_k f(\theta')+c_k) + d_k$, where $a_k,b_k,c_k,d_k$ are parameters for each group $k$ and $f(\theta'): \Theta \to \R_+$ is a monotonically non-decreasing function of $\theta_i'$, e.g., $f$ can be $\theta_i', \theta_i'^2$ or $\exp(\theta_i')$. 
The parameters $a_k,b_k,c_k,d_k$ for each group $k$ are generated  by three separate three-layer neural networks. The neural network is composed of three functions. 
\begin{enumerate}[leftmargin=*]
\item The first layer is a linear function $\ell^{(1)}: \Theta \rightarrow \R^h$ that takes the bids as input. Specifically, it is $\ell^{(1)}(\theta') = A^{(1)}\theta' + \beta^{(1)}$, where $A^{(1)}$ and $\beta^{(1)}$ are the coefficients and $h$ is a constant denoting the number of neural employed in the second layer. 
\item The second layer is a function $\ell^{(2)}: \R^h \rightarrow \R^h$ that takes the output from the first layer. It is a ReLU function, i.e., $\ell^{(2)}(x)= x$, if $x \geq 0$; otherwise, it is $0$. 
\item The third layer is a linear function $\ell^{(3)}: \R^h \rightarrow \R$ that takes the the output from the second layer. Specifically, it is $\ell^{(3)}(x) = A^{(3)} x + \beta^{(3)}$, where $A^{(3)}$ and $\beta^{(3)}$ are coefficients.
\end{enumerate}

\noindent For each group $k$, the parameters $a_k, b_k, d_k$ are calculated by the three layers $|\ell^{(3)}(\ell^{(2)}(\ell^{(1)}(\theta')))|$, while the parameter $c_k$ is computed by a single linear function
$\ell^{(c)}$ where $\ell^{(c)}= \left|A^{(c)}\theta' + \beta^{(c)}\right|$. 

\noindent{\bf Learning the parameters of the score function. }
Let $\mu \coloneqq (a_1, b_1, c_1, d_1, \ldots, a_m, b_m, c_m, d_m)$ be all learnable parameters in the score functions for all groups and $\sigma^{\mu}$ be the group score function parameterised by $\mu$.  We deploy the dual ascent techniques \cite{boyd2011distributed} to learn the parameters $\mu$ for score functions and approximate the optimal solution to Problem~\eqref{eqn:opt2}. 

Due to the design of the score function, Problem ~\eqref{eqn:opt2} can be simplified. Constraint~\eqref{con:inc2} on the non-decreasing of expected allocation is satisfied as the group score function $\sigma^{\mu}$ is designed to be non-decreasing. Constraints~\eqref{con:sum2} and~\eqref{eqn:nonneg2} are met due to the design of the expected allocation in Equation~\eqref{eqn:Pi}.
Constraint~\eqref{eqn:finite} is also met as both the expected allocation $\Pi_i$ and the bids $\theta'$ are positive and finite.
Thus we omit the three constraints and rewrite Problem~\eqref{eqn:opt2} as,
\begin{equation}
    \label{eqn:opt3}
    \begin{aligned}
    \min_{\sigma^{\mu}} & -\sum_{i\in N} p_i \Pi_i \\ 
    \text{s.t. } &  \max_{1 \leq k, l \leq m}|\sum_{i\in N_k} v_i \Pi_i- \sum_{i\in N_l} v_i \Pi_i | < \epsilon 
    \end{aligned}   
\end{equation}

We call Problem~\eqref{eqn:opt3} as the primal problem and establish its dual problem. Let $\psi(\sigma^{\mu}) \coloneqq \sum_{i\in N} p_i\Pi_i$ be the objective, and $g(\sigma^{\mu})\coloneqq \max_{1 \leq k, l \leq m}|\sum_{i\in N_k} v_i \Pi_i- \sum_{i\in N_l} v_i \Pi_i | -\epsilon$ be the difference between the maximum group unfairness and the targeted group fairness. 
Now we can use the Lagrange multiplier method to transform the problem into the dual.  We write the Lagrangian of the primal problem as $L(\sigma^{\mu},\lambda)= \psi(\sigma^{\mu}) +\lambda g(\sigma^{\mu})$, where $\lambda \geq 0$ is the Lagrangian multiplier. The dual problem is as follows: 
\begin{equation}
\begin{aligned}
    \max_{\sigma^{\mu}} & \inf L(\sigma^{\mu},\lambda)\\
    \mathrm{s.t. } & \lambda \geq 0
\end{aligned}  
\end{equation}
The dual problem aims to maximise the lower bound on the solution to the primal problem subject to the nonnegativity constraint on $\lambda$. 

Then, we apply the dual ascent method to approach the dual problem. Let $\alpha$ be the learning rate. We update the Lagrangian multiplier $\lambda$ by maximising $\inf L(\sigma^{\mu},\lambda)$ and update $\mu$ by minimising $L(\sigma^{\mu},\lambda)$ interchangeably by their gradients as the following:
\begin{equation*}
\begin{aligned}
    \lambda=\lambda+\alpha \Delta_{\lambda} L(\sigma^{\mu},\lambda),\\
    \sigma^{\mu}=\arg \min_{\sigma^{\mu}} L(\sigma^{\mu},\lambda).
\end{aligned}
\end{equation*}

\begin{algorithm}[h!]
\caption{Dual ascent for GSM}
\footnotesize
\label{alg:dual}
	\begin{algorithmic}[1]
	\Require the bids $\theta'_{\Stat}$ of buyers in $\Stat$, learning rate $\alpha$, number of episodes $T$, function $f(\cdot)$
	\Ensure group score function $\sigma_k, 1\leq k\leq m$
        \State initialise $t=0$, initialise $\lambda$ and $\mu$ and $\text{Fair}=\text{False}$
        \While {$t\leq T$}
        \State generate group score functions $\sigma^{\mu}$
        \State compute Lagrangian $L(\sigma^{\mu}, \lambda)$
        \State set $\sigma^{\mu*} =\arg\min_{\sigma^{\mu}} L(\sigma^{\mu},\lambda)$
        \If {$\epsilon$-group fairness (Constraint~\eqref{con:EGF2}) is satisfied}
         \State set $\sigma^{\mu}=\sigma^{\mu*}$
         \State set $\text{Fair}=\text{True}$
        \EndIf
        \State set $\lambda=\lambda+\alpha \Delta_{\lambda} L(\sigma^{\mu},\lambda)$ 
        \State set $t=t+1$
        \EndWhile
        \State \Return $\sigma_k, 1\leq k \leq m$ if $\text{Fair}$; No Solution, otherwise
	\end{algorithmic}
\end{algorithm}

The whole process of dual ascent algorithm is shown in Algorithm~\ref{alg:dual}. The algorithm takes the bids $\theta'_{\Stat}$ of buyers in $\Stat$, learning rate $\alpha$, number of episodes $T$, function $f(\cdot)$ as inputs and returns the  group score functions $\sigma_k, k \in [m]$ parameterised by the learned $\mu$. The algorithm first initialises the number of episode $t$, the Lagrangian multiplier $\lambda$ and the parameters $\mu$ of the neural networks, and also sets the status of Fair as ``False''. Then the algorithm updates the Lagrangian multiplier $\lambda$ and parameters $\sigma^{\mu}$ iteratively until the number of episodes reaches $T$. In each iteration, given the neural networks parameterised by updated $\mu$, the algorithm generates the group score functions $\sigma^{\mu}$. Given $\sigma^{\mu}$ and the updated $\lambda$, the algorithm computes the Lagrangian $L(\sigma^{\mu}, \lambda)$. After that, the algorithm finds $\sigma^{\mu}$ that minimises the Lagrangian using stochastic gradient descent (SGD) method. Then the parameters $\mu$ and the corresponding group score functions $\sigma^{\mu}$ are updated if the $\epsilon$-group fairness constraint is met. Otherwise,  $\lambda$ is updated using the gradient of the (updated) Lagrangian, and the algorithm goes to the next iteration. Finally, the algorithm returns the trained group score functions $\sigma_k, k \in [m]$.

\section{Experiments} \label{sec:experiments}
We conduct experiments to validate the performance of our proposed mechanisms. Our experiments serve to answer the following questions: {\bf (1)} Social welfare and revenue are two key performance metrics of a mechanism. We would like to investigate the performance of our mechanisms in terms of these two criteria, when compared with the optimal case, i.e., the second-price auction, which is IC but not fair, and the best group fair case, i.e., the simple mechanism, which is group fair but not IC. {\bf (2)} When ensuring the fairness at group level, the fairness at individual level is another metrics. We would also like to investigate the individual fairness obtained by the mechanisms. {\bf (3)} Also, we would like to verify the robustness of our mechanism in different setups, including {\bf (a)} valuations drawn from different distributions, {\bf (b)} varying group sizes, and {\bf (c)} varying group fairness levels. {\bf (4)} In GSM, function $f(\theta')$ for the group score functions can be any function that is non-decreasing in $\theta_i'$. We would like to test the effect of the function $f(\theta')$ on the performance.

\subsection{Setup}
\paragraph{Group Size.} In our experiment, we focus on the two-group cases, i.e., $N= N_1 \cup N_2$. In real world, groups can exhibit either equal or unequal size. For instance, the groups divided by gender tend to have similar sizes, while the groups divided by ethnicity have significantly different sizes. To capture this, we set the sizes of the two groups as $(|N_1|,|N_2|)= (100, 900), (300, 700),$ and $(500, 500)$. 

\paragraph{Valuation.} We generate four sets of random numbers to represent buyers' valuations. 
(a) Valuations are drawn from uniform distributions: (a.1) $\theta_{N_1}\sim \mathcal{U}(0,10)$ while $\theta_{N_2}\sim \mathcal{U}(0,8)$, and (a.2) $\theta_{N_1}\sim \mathcal{U}(0,10)$ while $\theta_{N_2}\sim \mathcal{U}(0,4)$. (b) Valuations are drawn from normal distributions: (b.1) $\theta_{N_1} \sim \mathcal{N}(5,1)$ while $\theta_{N_2}\sim \mathcal{N}(4,1)$, and (b.2) $\theta_{N_1} \sim \mathcal{N}(5,1)$ while $\theta_{N_2}\sim \mathcal{N}(2,1)$.
The differences in average valuation between the two groups in (a.1) and (b.1) are both $20\%$, while those in (a.2) and (b.2) are both $60\%$.

\paragraph{Mechanisms.} In addition to our mechanism, GPM and GSM, we also run two other mechanisms, second price auction mechanism and the simple mechanism (introduced in Sec.~\ref{sec:formuation}) as the benchmarks. The second price auction ensures IC but has no fairness guarantee while the simple mechanism is $0$-group fair but fails to ensure IC. The former provides an upper bounds on social welfare and revenue that all IC mechanisms could obtain. The latter provides an upper bounds on the group fairness.
When implementing the simple mechanism, we assume that the buyers truthfully report their valuations. 

\paragraph{Score function.} In GSM, we consider different formats of $f(\theta')$ in the design of the group score functions. Specifically, we set $f(\theta')$ as four formats: linear function $\theta_i'$, log function $\log(\theta_i'+1)$, square function $\theta_i'^2$ and exponential function $\exp{\theta_i'}$.  

\paragraph{Group fairness.} We consider different levels of group fairness, where $\epsilon = \{0.5, 0.75, 1.0, 1.25, 1.5\}$.

\paragraph{Evaluation metrics.} We evaluate the performances of mechanisms in terms of revenue, social welfare and individual fairness. The revenue and social welfare are defined as in Section~\ref{sec:formuation}. The individual fairness level, denoted by $\epsilon^I \in \R^+$, is defined as the maximum difference in expected gain of individual buyers within each group, i.e., $\epsilon^I \coloneqq \max_{k\in [m]} \max_{i,j\in N_k} |\theta_i \Pi_i - \theta_j \Pi_j|$.  

\paragraph{Implementation.} For each type of the distributions, we generate three sets of valuations with different size differences. For each valuation set, we vary group fairness levels.
For each setup, we run $100$ times and return the average values. 
GSM is implemented in Python 3.9 on NVIDIA GeForce RTX 3090 Ti GPU. All mechanisms are implemented in Python 3.9 on Apple M1 Pro CPU. 
The code is available on \url{https://anonymous.4open.science/r/fairness-EB4C/}.


\begin{figure}[t]
    \centering
    \label{fig:results-uniform}
    \includegraphics[width=\columnwidth]{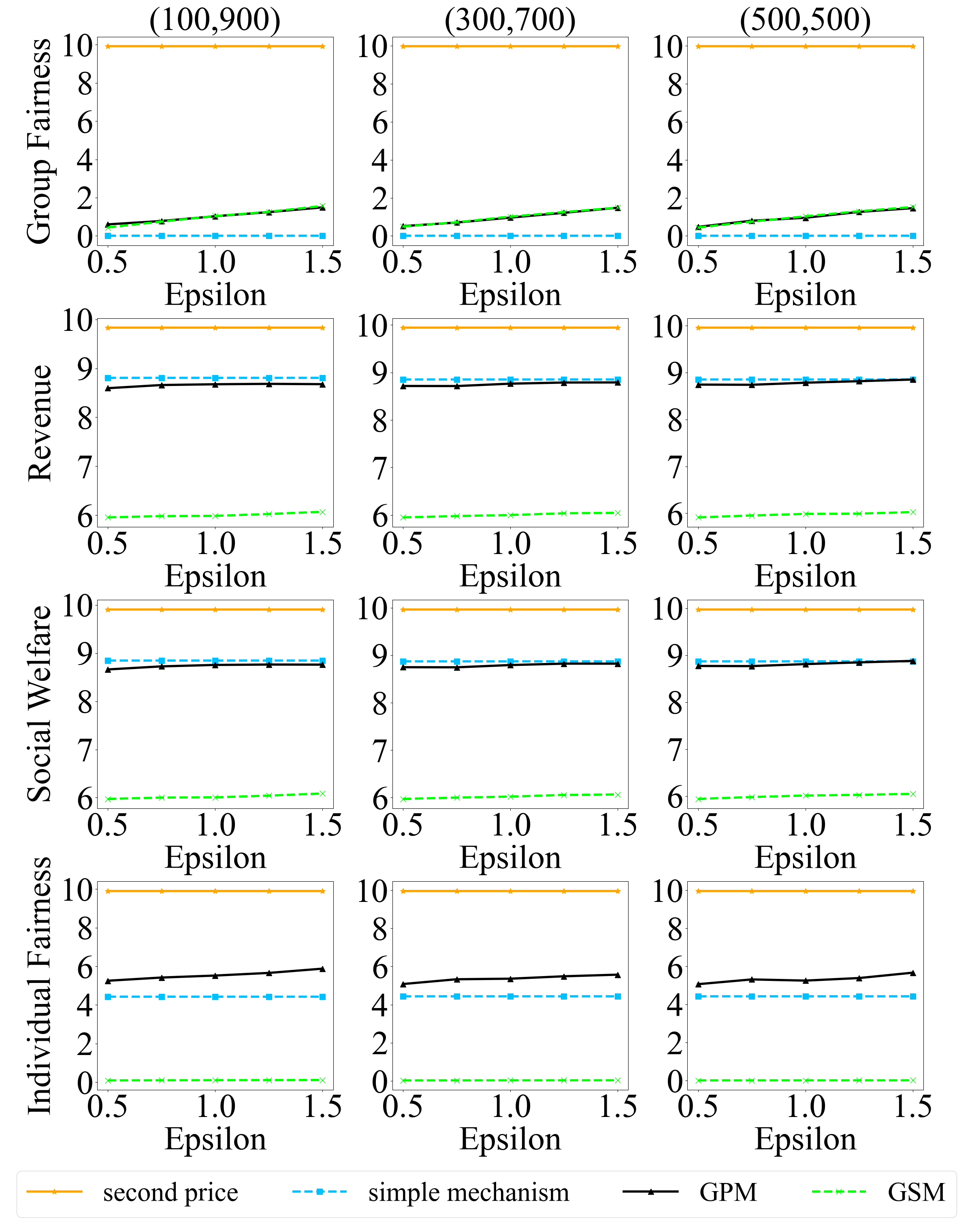}
    \caption{Performance of GPM, GSM with linear function, second-price auction and simply mechanism under valuations drawn from $\mathcal{U}(0,10)$ and $\mathcal{U}(0,8)$ and group sizes $(100,900)$ (left), $(300,700)$ (middle), and $(500,500)$ (right). }
    \vspace{0.5cm}
\end{figure}

\begin{figure}[t]
    \centering
    \label{fig:results-function} 
    \includegraphics[width=\columnwidth]{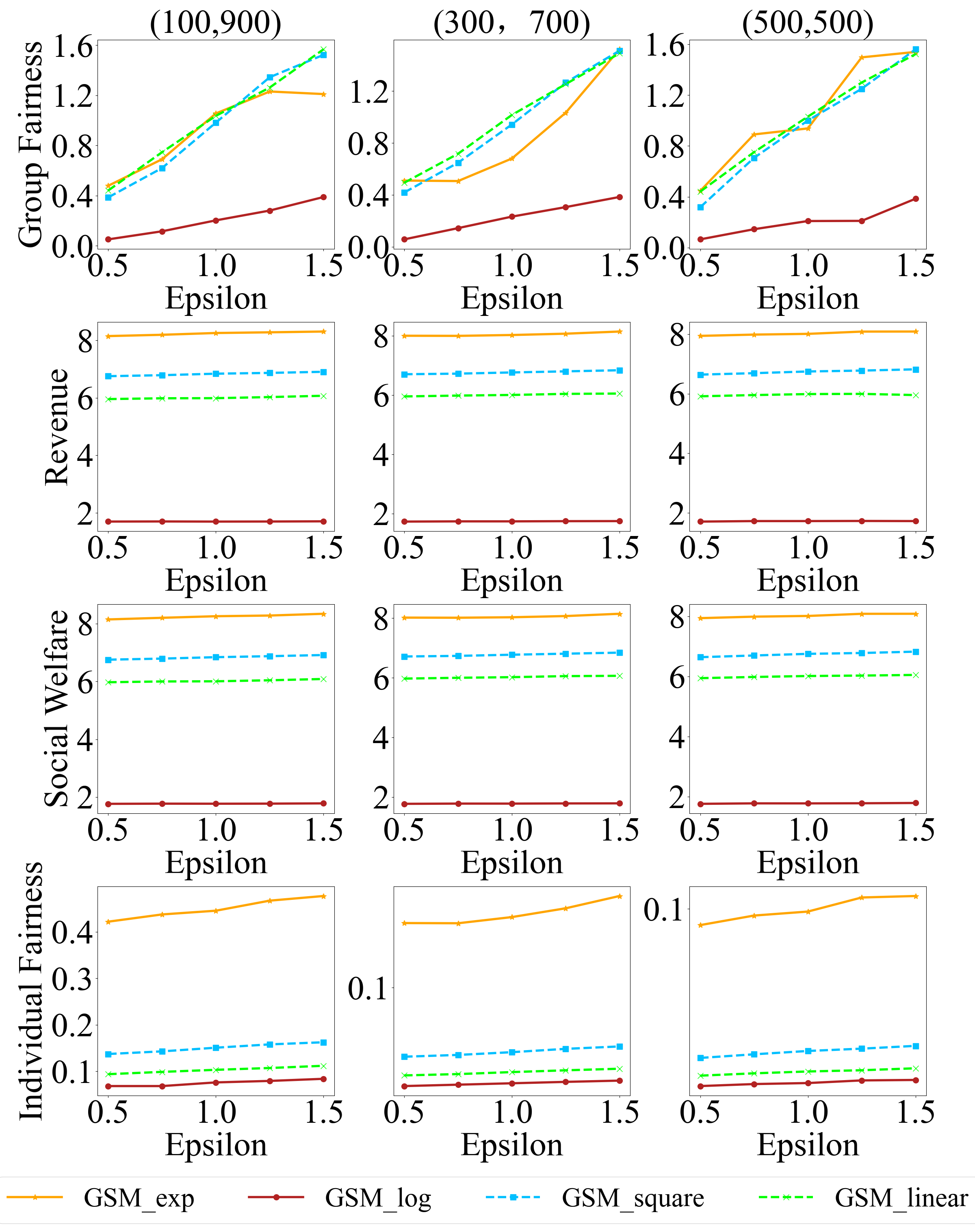}
    \caption{Performance of GSM with linear, square, log, and exponential functions under valuations drawn from $\mathcal{U}(0,10)$ and $\mathcal{U}(0,8)$ and group sizes $(100,900)$ (left), $(300,700)$ (middle), and $(500,500)$ (right).}
    \vspace{0.5cm}
\end{figure}

\subsection{Results} 

Figure~\ref{fig:results-uniform} shows the results under uniformly distributed valuations. For {\bf Q1}, as the value of group fairness increases, meaning the fairness requirement is less strict, both the social welfare and revenue of GPM and GSM increase. When comparing the four mechanisms, the second price auction obtains the highest social welfare and revenue, followed by the simple mechanism and GPM, while GSM obtains the least. It is worth to notice that the cost of obtaining fairness is small; GPM losses only $12.91\%$ of social welfare and $12.94\%$ of revenue comparing to the optimal, while GSM losses $39.21\%$ and $39.38\%$, resp., when the group sizes are $(100,900)$. 

For {\bf Q2}, Fig.~\ref{fig:results-uniform} also shows that GSM obtains almost $0$-individual fairness with $\epsilon^I=0.2$. GPM obtains individual fairness with $\epsilon^I= 5$ 
which is similar to that of simple mechanism.
Such results are consist with our claim that GSM ensures better individual fairness than GPM as the former assigns every buyer a positive winning probability. 
The second-price auction returns the worst 
with $\epsilon^I= 10$. 

For {\bf Q3}, our two mechanisms demonstrate robustness under varying distributed valuations, varying group size differences and varying group fairness levels. {\bf Firstly}, the results under normally distributed valuations show similar patterns as those under uniformly distributed valuations (See Fig.~\ref{fig:results-normal} in the appendix). The losses in social welfare and in revenue are even smaller. This can be attributed to that valuations under a normal distribution tend to be closely clustered, and the payments are closer to the valuations. 
The results for valuations with larger differences 
also show similar patterns (See Fig.~\ref{fig:result-EB} in the appendix). The losses in social welfare and in revenue increase. It might be explained by that larger differences in valuations across group makes it even harder to balance the welfare of the two groups.
{\bf Then} as the differences in group size decreases, the losses of social welfare and of revenue slightly increases in both of our mechanisms. This might be because that when the size of the group with low (high) valuations diminishes (expands), the highest valuation within it tends to decrease (increase), making it easier to balance the welfare of the two groups. 
{\bf Lastly}, with the increase of the value of $\epsilon$, the social welfare and revenue obtained by our two mechanisms increase. It is consistent with our expectation.
The value of individual fairness also increases. It can be explained by that when the group fairness is relaxed, the buyers with high valuations tend to be assigned with higher winning probability, resulting in a decrease in individual fairness. 

For {\bf Q4}, Fig.~\ref{fig:results-function} shows the results for GSMs with different function $f(\theta')$. 
Under the same value of $\epsilon$, GSM with log function has better group fairness than those with other functions. 
Also, in terms of social welfare and revenue, GSM with exponential function yields the highest social welfare and revenue, followed by those with square, linear, and log functions. 
In contrast, GSMs with different functions have an inverse ranking when it comes to individual fairness. 
That is because after the projection, the log function, comparing with the other functions, reduces the differences among the bids and relatively increases the winning chances of buyers with lower bids, which enhances individual fairness but scarifies social welfare and revenue.


\section{Conclusion and future work}
We consider fairness issue in auctions. We propose group fairness as a new concept, and propose two mechanisms that meet 
group fairness and incentive properties. As future work, we plan to explore (1) other auction scenarios, e.g., combinatorial auctions \cite{lehmann2001combinatorial} and auctions over social networks \cite{xiao2022multi,fang2023multi}, and (2) theoretical trade-off between fairness and social welfare, and revenue.



\begin{ack}
This work is partially supported by National Natural Science Foundation of China No. 62172077 and Research Fund for International Senior Scientists No. 62350710215. 
\end{ack}




\bibliography{main}

\clearpage

\appendix

\noindent{\LARGE \bf Appendix}

\section{Experiments} \label{app:exp}
Here we present the complementary results. Figure~\ref{fig:results-normal} shows the performance of the four mechanisms under the valuations drawn from normal distributions $\mathcal{N}(5,1)$ and $\mathcal{N}(4,1)$. Figures~\ref{fig:result-EB} shows the performance under the valuations drawn from uniform $\mathcal{U}(0,10) \& \mathcal{U}(0,4)$ and $\mathcal{N}(5,1) \& \mathcal{N}(2,1)$. 

\begin{figure}[ht]
    \centering
    \includegraphics[width=1\columnwidth]{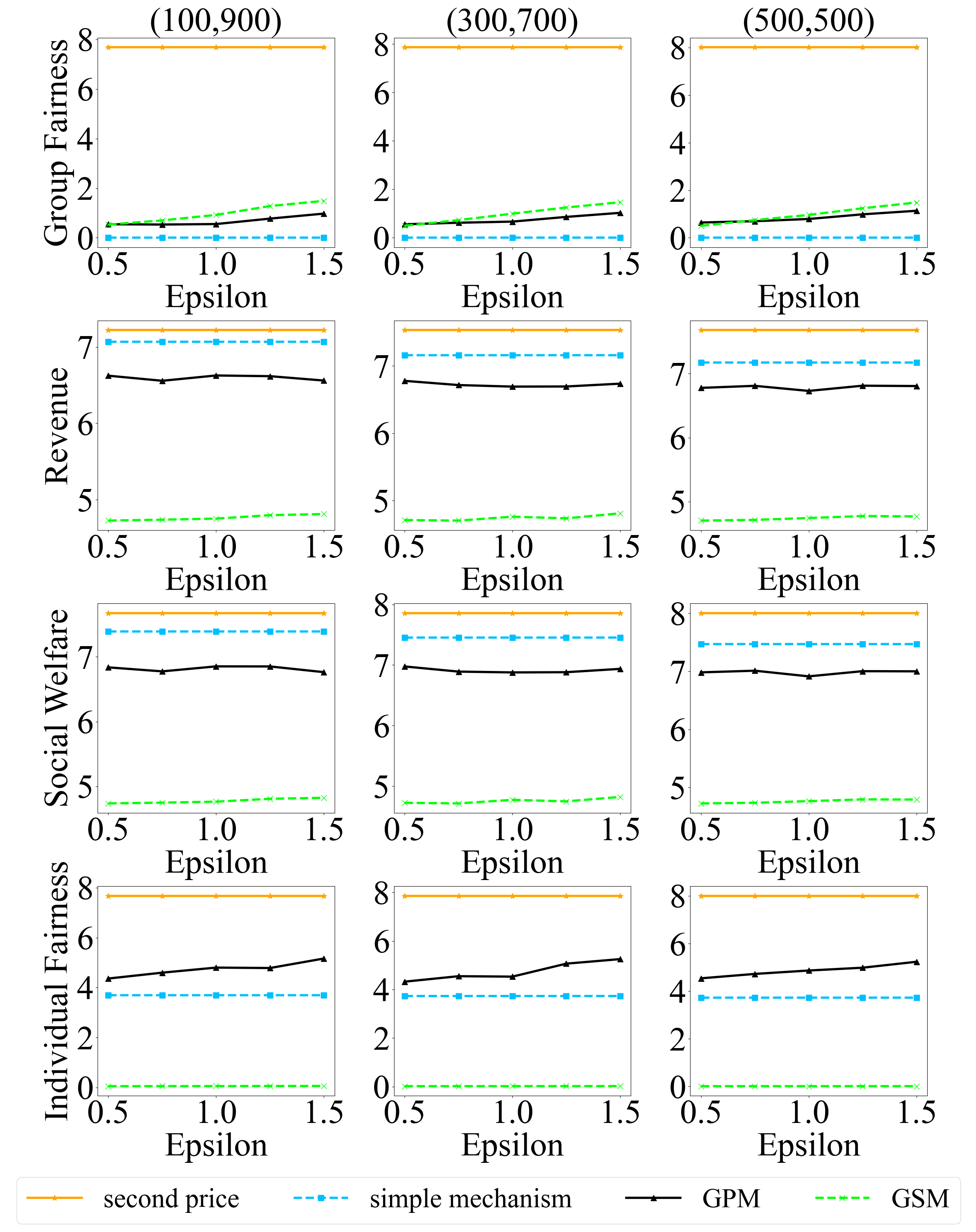}
    \caption{Performance of GPM, GSM, second-price auction and simply mechanism under normally distributed valuations and group sizes $(100,900)$ (left), $(300,700)$ (middle), and $(500,500)$ (right). }
    \label{fig:results-normal}
\end{figure}

\begin{figure}[ht]
    \centering
    \includegraphics[width=\columnwidth]{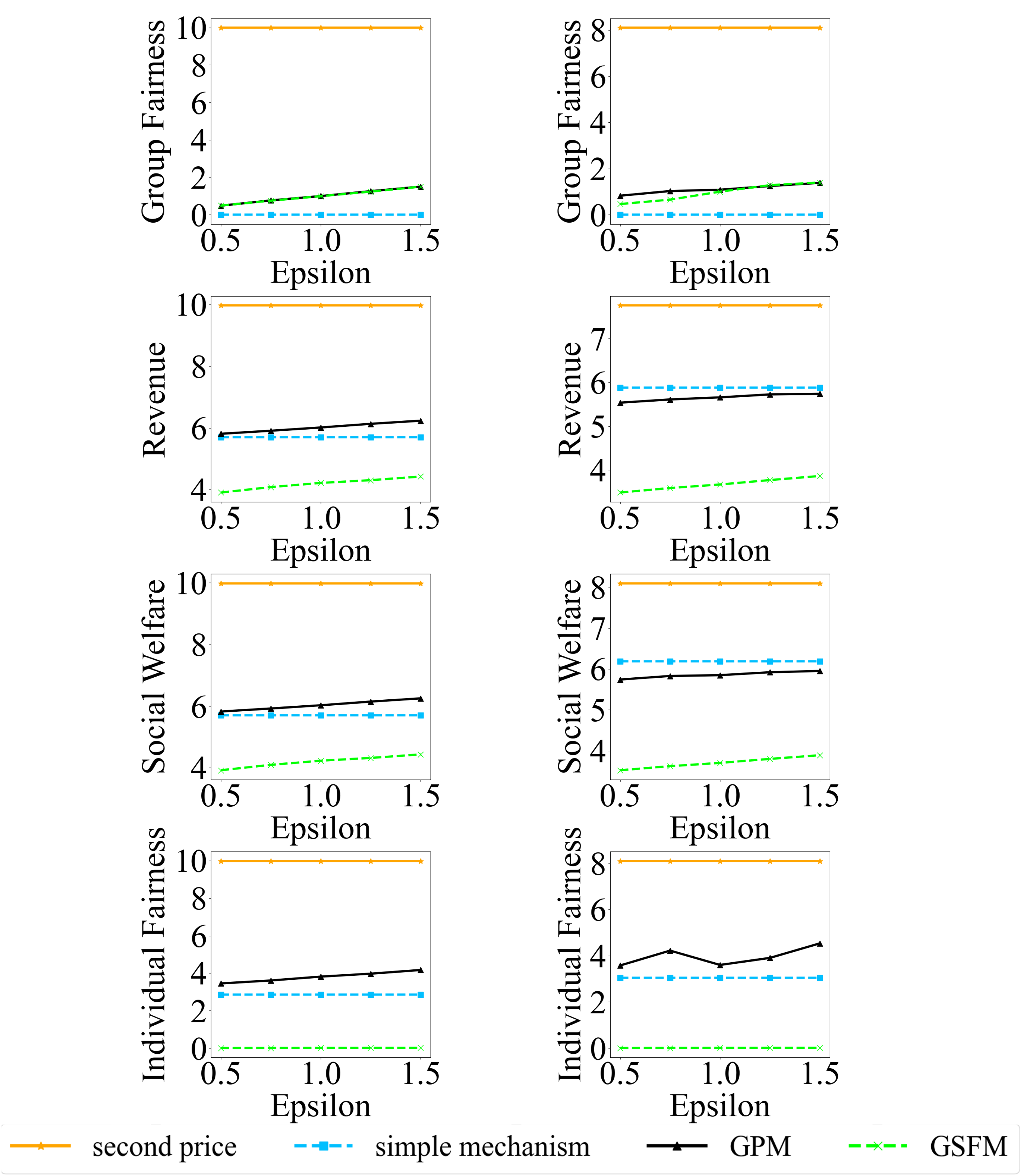}
    \caption{Performance of GPM, GSM with linear function, second-price auction and simply mechanism under group sizes $(500,500)$ and under valuations drawn from $\mathcal{U}(0,10) \& \mathcal{U}(0,4)$ (left), $\mathcal{N}(5,1) \& \mathcal{N}(2,1)$ (right). }
    \label{fig:result-EB} 
\end{figure}

%

\end{document}